\newcommand{\N}{\mathbb{N}}
\newcommand{\Q}{\mathbb{Q}}
\newcommand{\R}{\mathbb{R}}
\newcommand{\F}{\mathbb{F}}
\newcommand{\Fplus}{\mathbb{F}_{\geq 0}}
\newcommand{\Rplus}{\mathbb{R}_{\geq 0}}
\renewcommand{\H}{\mathcal{H}}
\newcommand{\tr}{\mathrm{tr}}
\newcommand{\defin}{:=}
\newcommand{\Pl}{\mathrm{Pl}} 
\newcommand{\Po}{\mathrm{Po}} 
\declaretheorem[numberwithin=section, name=Theorem]{thm}
\declaretheorem[sibling=thm, name=Corollary]{cor}
\declaretheorem[sibling=thm, name=Definition, style=definition]{defn}
\declaretheorem[sibling=thm, name=Example, style=remark]{expl}
\begin{document}

\title{Plausibility measures on test spaces} \author{Tobias Fritz
  \qquad\qquad\qquad Matthew S. Leifer \institute{Perimeter Institute for
    Theoretical Physics, Waterloo ON, Canada\thanks{Research at
      Perimeter Institute is supported in part by the Government of
      Canada through NSERC and by the Province of Ontario through
      MRI\@. TF is supported by the John Templeton Foundation. ML is supported by the Foundational Questions Institute
      (FQXi).}}  \email{tfritz@perimeterinstitute.ca
    \quad matt@mattleifer.info} }

\newcommand{\titlerunning}{Plausibility measures on test spaces}

\newcommand{\authorrunning}{T. Fritz and M. S. Leifer}

\newcommand{\event}{QPL 2015}

\maketitle

\begin{abstract}
  Plausibility measures are structures for reasoning in the face of
  uncertainty that generalize probabilities, unifying them with weaker
  structures like possibility measures and comparative probability
  relations.  So far, the theory of plausibility measures has only
  been developed for classical sample spaces.  In this paper, we
  generalize the theory to test spaces, so that they can be applied to
  general operational theories, and to quantum theory in particular.
  Our main results are two theorems on when a plausibility measure
  \emph{agrees} with a probability measure, i.e.\ when its comparative
  relations coincide with those of a probability measure.  For
  strictly finite test spaces we obtain a precise analogue of the
  classical result that the \emph{Archimedean} condition is necessary
  and sufficient for agreement between a plausibility and a
  probability measure.  In the locally finite case, we prove a
  slightly weaker result that the Archimedean condition implies almost
  agreement.
\end{abstract}

\section{Introduction}

It is often stated that a physical theory must (at least) supply
probabilities for the outcomes of experiments, but are probabilities
strictly necessary, or can we sometimes get away with a weaker
predictive structure?  Recent years have seen a growth of interest in
theories that are variously called \emph{possibilistic}
\cite{Fritz2009, Mansfield2012}, \emph{modal} \cite{Schumacher2010,
  Schumacher2010a, Schumacher2012}, or \emph{relational}
\cite{Abramsky2013}, in which the continuum of probability assignments
is replaced by a two valued assessment of whether or not an outcome is
possible.

If we are serious about using such weaker structures in fundamental
physics, then we need to employ a theory that is capable of
reproducing probabilistic predictions where they are known to work
well, i.e.\ in most ordinary experiments in quantum theory and
statistical mechanics, but allows for weaker predictions in general.
In the classical case, the theory of \emph{plausibility measures}
\cite{Friedman1995, Friedman2001, Halpern2001, Halpern2001a} can play
this role.  Both possibility and probability measures are examples of
plausibility measures, but in general plausibility measures allow for
comparative assessments, i.e.\ we may be able to say that $A$ is more
likely than $B$, without specifying a precise numerical probability
for either.  Possibility measures are obtained when there are only two
plausibility values and, as we know from the foundations of subjective
probability theory \cite{Fishburn1986}, a probability measure can be
derived when the ordering relation over events is rich enough.
Plausibility measures are therefore a good framework for unifying and
generalizing the various predictive structures that could be applied
to physical theories, but, so far, they have been limited to classical
theories.

In this paper, we generalize the theory of plausibility measures to
\emph{test spaces}, which are capable of representing nonclassical
theories such as quantum theory.  The fundamental question we address
is when a plausibility measure \emph{agrees} with a probabilistic
state on the test space, i.e.\ when the comparative relations are rich
enough to be faithfully represented by probabilities.  For finite test
spaces, we prove a precise analogue of the classical result that the
\emph{Archimedean} condition is necessary and sufficient for agreement
\cite{Kraft1959, Scott1964, Fishburn1986}.  For locally finite test
spaces, we prove that the Archimedean condition implies a weaker notion
called \emph{almost} agreement.  This is an important first step in
the program of generalizing physics beyond probabilities, as it
enables us to identify the limits in which probability theory applies.

Before embarking on the study of more general predictive structures,
it is worth mentioning at least one of the motivations.  It has been
argued that argued that, along with spacetime, probabilities will
inevitably become fuzzy in future theories of quantum gravity
\cite{Buniy2005, Buniy2006, Mueller2009}.  One proposal for
implementing this is to discretize the state space of quantum systems
so that, for example, a system with a finite dimensional Hilbert
space, such as a spin-$1/2$ particle, would only be able to occupy a
finite number of states \cite{Buniy2005}.  This violates the
superposition principle and raises problems such as how the system
decides which state to ``snap to'' when an observable is measured for
which the eigenvectors are not allowed states.

However, from a test space perspective, the state space of a theory is
derived from the attempt to consistently assign probability measures
to the outcomes of measurements, such that outcomes of measurements
that are physically identified are always assigned the same
probability.  For example, in quantum theory, Gleason's theorem
\cite{Gleason1957} says that any such assignment can be represented by
a density operator on Hilbert space.  If we no longer have the
continuum of probability assignments, then we no longer have the right
to assert that the state space of a quantum system must be a Hilbert
space, let alone a discrete collection of vectors within one.  In our
view, if we are interested in employing fuzzified probabilities in
physics, it is better to start again from scratch with a well-defined
discrete predictive structure, such as a possibility measure or a
comparative plausibility measure, and let the theory tell us what the
state space must be.  This is likely to lead to a more consistent
theory, without the need to introduce arbitrary rules for how systems
are updated upon measurement.  The present work can be viewed as an
attempt to take the first few steps along this road.

The remainder of this paper is structured as follows.  \S\ref{Plaus}
introduces the definitions of test spaces and plausibility measures,
and gives examples.  \S\ref{Agree} defines various notions of
agreement between plausibility measures and probability measures and
states the main results.  \S\ref{Hahn} develops a Hahn-Banach theorem for order unit spaces, which we then use in
\S\ref{Proofs} to prove the main results.  \S\ref{Related} discusses
related work and \S\ref{Conc} concludes with a discussion of possible
future directions.

\section{Plausibility measures on test spaces}

\label{Plaus}

Intuitively speaking, a test space is a generalization of a measurable
space, designed to allow for incompatible events that cannot be
resolved in a single experiment.  It has its origins in the work of
Foulis and Randall \cite{Foulis1972, Randall1973} (see
\cite{Wilce2009} for a review).

\begin{defn}
  A \emph{test space} $(X,\Sigma)$ consists of a set $X$ together with
  a set of subsets $\Sigma\subseteq 2^X$ such that the members of
  $\Sigma$ cover $X$, i.e.~$\bigcup_{T\in\Sigma} T = X$.
\end{defn}

The elements of $X$ are called \emph{outcomes}, while the elements of
$\Sigma$ are the \emph{tests}. Each test represents the set of
possible outcomes of a measurement that can be performed on the
system.  A subset $A\subseteq X$ is called an \emph{event} if it is a
subset of some test. The assumption that $\Sigma$ covers $X$ can be
rephrased as stating that every singleton set $\{x\}$ is an event. We
write $E(X,\Sigma)$ for the set of all events. Under subset inclusion,
$E(X,\Sigma)$ is a partially ordered set with $\emptyset$ as the least
element and all tests as maximal elements.

\begin{expl}
  For a classical system with a finite sample space $X$, the test
  space is $(X,\{X\})$, i.e.\ there is only one test that contains all
  possible outcomes.  $X$ consists of all the possible configurations
  of the system and its elements can, in principle, be resolved by a
  single measurement.  The events are $E(X,\{X\}) = 2^X$, as they
  would be in classical probability theory with a finite sample space.
\end{expl}
\begin{expl}
  For a quantum system with finite dimensional Hilbert space $\H$, the
  test space is $(P(\H),\allowbreak b(\H))$, where $P(\H)$ is the
  projective Hilbert space of $\H$, or equivalently the set of unit
  vectors in $\H$ with vectors differing by a global phase identified,
  and $b(\H)$ is the set of rank-1 projector valued measures, or
  equivalently the set of orthonormal bases of $\H$ with basis vectors
  identified if they differ by a global phase.  Each test represents a
  measurement on the system that is as fine-grained as possible.
  Events correspond to projection operators that may be of higher
  rank.
\end{expl}

\begin{defn}
  A test space $(X,\Sigma)$ is \emph{locally finite} if every test
  $T\in\Sigma$ is finite.
\end{defn}

In what follows, all test spaces are assumed locally finite.  This
includes finite classical systems and quantum systems with finite
dimensional Hilbert spaces.  Note that the outcome set $X$ can be
infinite in a locally finite test space if there are an infinite
number of tests, as there are in the quantum case.

\begin{defn}
  A \emph{probability measure} on a test space $(X,\Sigma)$ is a
  function $\mu:X\to\Rplus$ such that $\sum_{x\in T} \mu(x)=1$ for
  every test $T\in\Sigma$.

  A probability measure can be extended to a function $E(X,\Sigma)
  \to \Rplus$, which we also denote $\mu$, via $\mu(A) = \sum_{x \in
  A} \mu(x)$.
\end{defn}

Note that the same outcome may appear in more than one test in a test
space, e.g.\ $(\{x,y,z\},\{\{x,y\},\allowbreak \{y,z\},\{z,x\}\})$.
This means that, for whatever reason, the outcome $x$ is thought to
have the same physical meaning regardless of whether it appears in
test $\{x,y\}$ or $\{z,x\}$.  Since $\mu$ is a function of $X$, the
probability assigned to an outcome does not depend on which test is
being measured.  This feature is often called \emph{noncontextuality}
of probability assignments.

\begin{expl}
  A probability measure on a classical test space $(X, \{X\})$ with finite $X$ is just
  a probability measure on $(X,2^X)$ in the sense of classical
  probability theory.
\end{expl}

\begin{expl}
  By Gleason's theorem \cite{Gleason1957}, for a Hilbert space $\H$
  with $\mathrm{dim}(\H) \geq 3$, a probability measure on the quantum
  test space $(P(\H),b(\H))$ is of the form $\mu(\Pi) = \tr(\Pi \rho)$
  where $\Pi$ is a projection operator and $\rho$ is a density
  operator (semi-positive operator with $\tr(\rho) = 1$).
\end{expl}

Note that a probability measure is usually called a \emph{state} in
the test space literature since, in the quantum case, it corresponds
to the usual notion of a quantum state.  However, plausibility
measures have equal claim to be thought of as physical states, so we
do not use this terminology here.

\begin{defn}
  A \emph{plausibility measure} on a test space $(X,\Sigma)$ is a
  function $\Pl:E(X,\Sigma)\to D$, where $(D,\preceq)$ is a partially
  ordered set of \emph{plausibility values}, satisfying the following
  conditions:
  \begin{enumerate}
  \item For all $T,R \in \Sigma$, \quad $\Pl(T) = \Pl(R)$. \label{pl1} 
  \item If $A$ and $B$ are events with $A\subseteq B$, then
    $\Pl(A)\preceq \Pl(B)$. \label{pl2}
  \item For any $T \in \Sigma$, \quad $\Pl(\emptyset) \prec
    \Pl(T)$. \label{pl3}
  \end{enumerate}
\end{defn}

Here, we write ``$\Pl(\emptyset) \prec \Pl(T)$'' as shorthand for
``$\Pl(\emptyset) \preceq \Pl(T)$ and $\Pl(T) \not\preceq
\Pl(\emptyset)$''.

This definition is the natural generalization of a classical
plausibility measure \cite{Friedman1995, Friedman2001, Halpern2001} to
test spaces, and we recover the classical case for the test space
$(X,\{X\})$.  Axiom~\ref{pl3} is not always imposed, but it prevents
the trivial case where every event has the same plausibility.

Under these axioms, the image of $\Pl$ is a bounded poset with minimal
element $\Pl(\emptyset)$ and maximal element $\Pl(T)$, where $T \in
\Sigma$ is any test.  Because of this, without loss of generality, we
may assume that $D$ is a bounded poset, with minimal element $0$ and
maximal element $1$, and demand that $\Pl(\emptyset) = 0$ and $\Pl(T)
= 1$ for any test $T \in \Sigma$.  Axiom~\ref{pl3} then becomes $0
\neq 1$.

\begin{expl}
  If $D=\{0,1\}$ with $0\prec 1$, then $\Pl$ is a \emph{possibility measure}.  $0$ is
  interpreted as impossibility and $1$ as possibility.  Note that,
  given any plausibility measure $\Pl$, we may derive a possibility
  measure $\Po$ from it by setting $\Po(A) = 0$ if $\Pl(A) = 0$ and
  $\Po(A) = 1$ otherwise.
\end{expl}

\begin{expl}
  Any probability measure $\mu$ on a test space is also a
  plausibility measure, where the plausibility values are the unit
  interval $[0,1]$, which is totally ordered.
\end{expl}

\begin{expl}
  One way in which plausibility measures can arise is if we have a set
  $\{\mu_i\}$ of possible probability measures on a test space, but no
  prior probability distribution over them.  One can then set $\Pl(A)
  \preceq \Pl(B)$ iff $\mu_i(A) \leq \mu_i(B)$ for every measure in
  the set.
\end{expl}

\section{Agreement}

\label{Agree}

We are interested in determining the conditions under which a
plausibility measure can be faithfully represented by a probability
measure.  In the classical case, this question has been studied
extensively \cite{Fishburn1986}, from which we adapt the following
definitions.

\begin{defn}
  A plausibility measure $\Pl$ on a test space $(X,\Sigma)$
  \emph{agrees} with a probability measure $\mu$ if
  \begin{equation}
    \label{agree}
    \Pl(A) \preceq \Pl(B) \quad \Longleftrightarrow \quad \mu(A) \leq
    \mu(B). 
  \end{equation}
  $\Pl$ \emph{almost agrees} with $\mu$ if
  \begin{equation}
    \Pl(A) \preceq \Pl(B) \quad \Longrightarrow \quad \mu(A) \leq \mu(B)
  \end{equation}
\end{defn}

Note that \cref{agree} is equivalent to
  \begin{equation}
    \Pl(A) \prec \Pl(B) \quad \Longleftrightarrow \quad \mu(A) <
    \mu(B). 
  \end{equation}

Agreement implies that the image of $\Pl$ is totally ordered.  In
contrast, almost agreement is quite weak as, not only does it not
imply total ordering, but it also allows $\mu(A) = 0$ when $\Pl(A)
\succ 0$.

\begin{expl}
  \label{agreefail}
  Finding a $\mu$ that agrees with $\Pl$ is not always possible.  For
  instance, the test space $(\{x,y,z\},\{\{x,y\},\{y,z\},\{z,x\}\})$
  has a possibility measure defined by $\Pl(x) = 0$, $\Pl(y) = 1$ and
  $\Pl(z) = 1$, with the assignments to other events entailed by
  axiom~\ref{pl2}.  However, there is no probability measure with
  $\mu(x) = 0$ and both $\mu(y) > 0$ and $\mu(z) > 0$.  This is
  because we must have $\mu(x) + \mu(y) = 1$ and $\mu(y) + \mu(z) =
  1$, but these assignments would imply $\mu(x) + \mu(y) = 0 + 1 -
  \mu(z) < 1$.
\end{expl}

\begin{expl}
  \label{kps-ex}
  Even in the classical case, agreement is not always possible, as was
  shown by Kraft, Pratt and Seidenberg \cite{Kraft1959}.  For the test
  space $(\{1,2,3,4,5\},\{\{1,2,3,4,5\}\})$, consider the following
  relations between plausibility values
  \begin{align}
    \Pl(\{1,3\}) & \prec \Pl(\{4\}), \\
    \Pl(\{1,4\}) & \prec \Pl(\{2,3\}), \\
    \Pl(\{3,4\}) & \prec \Pl(\{1,5\}), \\
    \Pl(\{2,5\}) & \prec \Pl(\{1,3,4\}).
  \end{align}
  If there exists a probability measure $\mu$ that agrees with these
  assignments then, denoting $\mu(j)$ by $\mu_j$, because of
  additivity we must have
  \begin{align}
    \mu_1 + \mu_3 & < \mu_4, \\
    \mu_1 + \mu_4 & < \mu_2 + \mu_3, \\
    \mu_3 + \mu_4 & < \mu_1 + \mu_5, \\
    \mu_2 + \mu_5 & < \mu_1 + \mu_3 + \mu_4,
  \end{align}
  but summing these inequalities and cancelling terms leads to $0 <
  0$, so no such probability measure exists.
\end{expl}

Clearly then, the existence of an agreeing probability measure
requires additional constraints to be imposed on plausibility
measures.  One such condition is as follows.  Let $(A_1,\ldots,A_n)$
and $(B_1,\ldots,B_n)$ be families of events such that every outcome
in $X$ occurs the same number of times in both, then, for a
probability measure
\[
\mu(A_1) \leq \mu(B_1),\quad\ldots,\quad \mu(A_{n-1}) \leq
\mu(B_{n-1}) \quad\Rightarrow\quad \mu(A_n) \geq \mu(B_n).
\]
Consequently, in order for a plausibility measure to agree with $\mu$,
it must satisfy the analogous condition:
\begin{equation}
  \label{archim}
  \Pl(A_1) \preceq \Pl(B_1),\quad\ldots,\quad \Pl(A_{n-1}) \preceq
  \Pl(B_{n-1}) \quad\Rightarrow\quad \Pl(A_n) \succeq \Pl(B_n).  
\end{equation} 

\begin{defn}
  A plausibility measure on a test space $(X,\Sigma)$ is
  \emph{Archimedean}\footnote{This is sometimes also called
    \emph{strong additivity} in the literature.} if it satisfies
  \cref{archim} for every pair $(A_1,\ldots,A_n)$ and
  $(B_1,\ldots,B_n)$ of families of events such that every outcome in
  $X$ occurs the same number of times in both.
\end{defn}

\begin{expl}
  The plausibility measure in \cref{agreefail} is non-Archimedean.
  Consider the families of events $(\{y,z\},\{x,z\},\{x\})$ and
  $(\{x,y\},\{x,z\},\{z\})$ in which $x$, $y$ and $z$ appear the same
  number of times.  The possibility measure with $\Pl(x) = 0$, $\Pl(y)
  = 1$ and $\Pl(z) = 1$ satisfies
  \begin{align}
    \Pl(\{y,z\}) & \preceq \Pl(\{x,y\}) & \Pl(\{x,z\}) & \preceq
    \Pl(\{x,z\}),
  \end{align}
  but $\Pl(x) \prec \Pl(z)$.
\end{expl}

\begin{expl}  
  For \cref{kps-ex}, the families $(\{1,3\}, \{1,4\}, \{3,4\},
  \{2,5\})$ and $(\{4\}, \{2,3\}, \{1,5\}, \{1,3,4\})$ fail the
  Archimedean condition.
\end{expl}

In our terminology, a classic theorem in comparative probability
\cite{Kraft1959, Scott1964, Fishburn1986} states
\begin{thm}
  A plausibility measure $\Pl$ on a finite classical test space $(X,\{X\})$
  agrees with some probability measure $\mu$ iff the image of $\Pl$ is
  totally ordered and $\Pl$ is Archimedean.
\end{thm}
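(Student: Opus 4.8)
The plan is to reduce this to a hyperplane-separation argument. Since we are on the finite classical test space $(X,\{X\})$ with $|X| = N$, a probability measure is a point in the simplex $\Delta \subseteq \R^N$, and the extended measure of an event $A$ is $\mu(A) = \langle \mu, \indf{A}\rangle$ where $\indf{A} \in \{0,1\}^N$ is the indicator vector. The condition ``$\mu(A) \leq \mu(B)$'' is thus $\langle \mu, \indf{B} - \indf{A}\rangle \geq 0$, a linear inequality in $\mu$. The necessity direction is essentially done in the excerpt: if $\Pl$ agrees with $\mu$ then its image is order-isomorphic (as a sub-poset) to a subset of $[0,1]$, hence totally ordered, and the Archimedean condition is forced by summing indicator identities exactly as in the displayed computation before the definition. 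So the work is in sufficiency.

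For sufficiency, assume the image of $\Pl$ is totally ordered and $\Pl$ is Archimedean. First I would collect the data of $\Pl$ into a set of desired strict and non-strict inequalities: let $P = \{\, \indf{B} - \indf{A} : \Pl(A) \preceq \Pl(B)\,\}$ and $S = \{\, \indf{B} - \indf{A} : \Pl(A) \prec \Pl(B)\,\}$, viewed as vectors in $\R^N$. We seek $\mu \in \Delta$ with $\langle \mu, v\rangle \geq 0$ for all $v \in P$ and $\langle \mu, v\rangle > 0$ for all $v \in S$. Since the image is totally ordered, for any two events $A,B$ exactly one of $\Pl(A)\prec\Pl(B)$, $\Pl(B)\prec\Pl(A)$, $\Pl(A)=\Pl(B)$ holds, so it suffices to realize the strict inequalities in $S$ (the non-strict ones and the equalities then follow by total ordering and a limiting/closure argument, or by adding equalities $\langle\mu,v\rangle = 0$ as two-sided constraints). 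The existence of such a $\mu$ is a linear-feasibility question, and by a theorem of the alternative (Gordan/Stiemke, or a Hahn--Banach separation in $\R^N$) it fails precisely when $0$ lies in a certain cone: namely when there exist nonnegative coefficients, not all zero on the strict part, and a vector orthogonal to $\indf{X} = (1,\dots,1)$ adjustment, such that a nonnegative combination of the $v$'s equals a nonpositive multiple of... — more carefully, feasibility of $\{\langle\mu, v\rangle \geq 0\ \forall v\in P,\ \langle\mu,v\rangle>0\ \forall v\in S,\ \langle\mu,\indf{X}\rangle = 1,\ \mu\geq 0\}$ fails iff some nonnegative combination $\sum_i \lambda_i v_i$ with at least one $v_i\in S$ having $\lambda_i>0$ is coordinatewise $\leq 0$.

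The key step — and the main obstacle — is translating that obstruction back into a violation of the Archimedean condition. If $\sum_i \lambda_i v_i \leq 0$ coordinatewise with the $v_i = \indf{B_i} - \indf{A_i}$, then $\sum_i \lambda_i \indf{A_i} \geq \sum_i \lambda_i \indf{B_i}$ coordinatewise. Rational approximation of the $\lambda_i$ (then clearing denominators) lets us assume the $\lambda_i$ are nonnegative integers, and replacing a term with multiplicity $\lambda_i$ by $\lambda_i$ repeated copies turns this into two families $(A_1,\dots)$ and $(B_1,\dots)$ where each outcome occurs at least as often on the $A$-side as on the $B$-side; padding the $B$-side with singletons (which is legitimate since singletons are events and only makes the $B$-multiplicities larger) equalizes the counts. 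Now the Archimedean condition, applied with the roles arranged so that the strict pair $(A_k, B_k)$ sits last, yields $\Pl(A_k) \succeq \Pl(B_k)$, contradicting $\Pl(A_k) \prec \Pl(B_k)$. Care is needed on two points: making the padding argument respect the ``each outcome the same number of times'' bookkeeping, and handling the passage from real to rational coefficients (a compactness/continuity argument, since the feasible region is a polytope and the obstruction, if it exists over $\R$, can be taken with rational data because it is witnessed by a vertex of a rational polyhedron). Once the contradiction is in hand, feasibility holds, and any feasible $\mu$ agrees with $\Pl$: it respects all strict inequalities by construction, the non-strict ones follow since the image is totally ordered (if $\Pl(A)\preceq\Pl(B)$ but $\mu(A)>\mu(B)$ then $\Pl(B)\prec\Pl(A)$, contradiction), and equalities of plausibility force equalities of measure by applying the argument in both directions. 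This completes the sketch; the genuinely delicate part is the integer/rational reduction combined with the padding to exactly equal multiplicities.
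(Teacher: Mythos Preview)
Your approach is correct and essentially the same as the paper's (which proves the more general \cref{mainfin} for arbitrary finite $X$, of which the classical case $\Sigma=\{X\}$ is the specialization): both encode the comparisons as the cone generated by differences of indicator vectors in the free vector space on $X$, invoke a Farkas/Hahn--Banach separation, and translate the linear obstruction into an integer identity among indicator vectors that contradicts the Archimedean condition. The only cosmetic difference is that the paper separates each strict pair $\Pl(A)\prec\Pl(B)$ individually and then \emph{averages} the resulting functionals into a single probability measure, whereas you solve the whole mixed strict/non-strict system in one shot via a Motzkin-type theorem of the alternative; the paper also works over $\Q$ from the outset, which sidesteps the rational-approximation step you flag at the end.
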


Our task is to generalize this to locally finite test spaces.  This
works perfectly for test spaces with finite $X$, but so far we have
only been able to prove a weaker result in the locally finite case.

\begin{restatable}{thm}{mainloc}
  \label{mainloc}%
  A plausibility measure $\Pl$ on a locally finite test space
  $(X,\Sigma)$ almost agrees with some probability measure $\mu$ if
  $\Pl$ is Archimedean.
\end{restatable}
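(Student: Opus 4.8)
The plan is to realise the desired probability measure as a \emph{state} on a suitable ordered vector space with order unit, and to produce it via the Hahn--Banach theorem for order unit spaces developed in \S\ref{Hahn}. Let $W$ be the real vector space of finitely supported functions $X\to\R$; since the test space is locally finite, every event $A$ is finite and hence its characteristic function $\chi_A$ lies in $W$. A probability measure on $(X,\Sigma)$ is the same thing as a linear functional $\phi\colon W\to\R$ with $\phi(\chi_{\{x\}})\geq 0$ for all $x\in X$ and $\phi(\chi_T)=1$ for all $T\in\Sigma$, the dictionary being $\mu(x)=\phi(\chi_{\{x\}})$ and $\mu(A)=\phi(\chi_A)$; and such a $\mu$ almost agrees with $\Pl$ precisely when $\phi(\chi_B-\chi_A)\geq 0$ whenever $\Pl(A)\preceq\Pl(B)$.

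So fix a test $T_0\in\Sigma$, let $K\subseteq W$ be the convex cone generated by all differences $\chi_B-\chi_A$ with $\Pl(A)\preceq\Pl(B)$, and look for a linear functional that is nonnegative on $K$ and sends $\chi_{T_0}$ to $1$. First one records two consequences of the axioms: axiom~\ref{pl2} gives $\chi_{\{x\}}=\chi_{\{x\}}-\chi_\emptyset\in K$ for every outcome $x$, and axiom~\ref{pl1} gives $\chi_T-\chi_{T_0}\in K\cap(-K)$ for every test $T$. From these it is routine that $\chi_{T_0}$ is an order unit for $(W,K)$: if $x$ belongs to a test $R$ then $\chi_{T_0}-\chi_{\{x\}}=(\chi_{T_0}-\chi_R)+(\chi_R-\chi_{\{x\}})\in K$, so for any $w\in W$, being a finite combination of the $\chi_{\{x\}}$, some large multiple of $\chi_{T_0}$ dominates both $w$ and $-w$. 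Thus $(W,K,\chi_{T_0})$ is an order unit space, after the standard passage to the quotient by $K\cap(-K)$ and Archimedeanisation if the theorem of \S\ref{Hahn} demands it (operations under which states correspond).

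The crux, and the step I expect to be the main obstacle, is to show $-\chi_{T_0}\notin K$; this is exactly where the Archimedean hypothesis enters. Suppose instead $-\chi_{T_0}=\sum_k t_k(\chi_{B_k}-\chi_{A_k})$ with finitely many $t_k>0$ and $\Pl(A_k)\preceq\Pl(B_k)$. This linear system in the unknowns $t_k$ has rational coefficients, so among its nonnegative solutions there is a rational one; clearing denominators gives integers $d_k\geq 1$ and $N\geq 1$ with
\[
\sum_k d_k\,\chi_{A_k}\;=\;\sum_k d_k\,\chi_{B_k}\;+\;N\,\chi_{T_0}.
\]
Read the two sides as the characteristic-function sums of two families of events of the same length: on the left, $A_k$ with multiplicity $d_k$ together with $N$ copies of $\emptyset$; on the right, $B_k$ with multiplicity $d_k$ together with $N$ copies of $T_0$. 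The identity says exactly that every outcome of $X$ occurs equally often in the two families, and every matched pair is $\preceq$-ordered --- the pairs $(A_k,B_k)$ by hypothesis and the pairs $(\emptyset,T_0)$ by axiom~\ref{pl3}. Placing one $(\emptyset,T_0)$ pair last and applying the Archimedean condition forces $\Pl(\emptyset)\succeq\Pl(T_0)$, contradicting axiom~\ref{pl3}.

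With $\chi_{T_0}$ an order unit and $-\chi_{T_0}\notin K$, the Hahn--Banach theorem of \S\ref{Hahn} yields a state $\phi\colon W\to\R$, that is $\phi\geq 0$ on $K$ and $\phi(\chi_{T_0})=1$. Then $\mu(x)\defin\phi(\chi_{\{x\}})$ is nonnegative, and for any test $T$ one has $\sum_{x\in T}\mu(x)=\phi(\chi_T)=\phi(\chi_{T_0})=1$ because $\chi_T-\chi_{T_0}\in K\cap(-K)$, so $\mu$ is a probability measure; finally $\Pl(A)\preceq\Pl(B)$ gives $\chi_B-\chi_A\in K$, hence $\mu(B)-\mu(A)=\phi(\chi_B-\chi_A)\geq 0$, which is almost agreement. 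Beyond the input from \S\ref{Hahn}, all the real content sits in the argument that $-\chi_{T_0}\notin K$.
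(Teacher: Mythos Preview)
Your proof is correct and follows essentially the same approach as the paper: build the free vector space on $X$, let the cone be generated by the differences $\chi_B-\chi_A$ with $\Pl(A)\preceq\Pl(B)$, verify that $\chi_{T_0}$ is an order unit, use the Archimedean hypothesis to show $-\chi_{T_0}$ is not in the cone, and then invoke \cref{HBsimple}. Two minor differences are worth noting: the paper works over $\Q$ from the outset, which makes the coefficients in \eqref{decomp} rational automatically and so avoids your appeal to the existence of a rational nonnegative solution to a rational linear system; and the paper actually proves the stronger statement that $e_A-e_B\notin C$ whenever $\Pl(A)\prec\Pl(B)$ (not just the case $A=\emptyset$, $B=T_0$), which it then reuses in the proof of \cref{mainfin}. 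Also, your hedge about quotienting by $K\cap(-K)$ and Archimedeanising is unnecessary here, since the order unit space axioms in \S\ref{Hahn} do not require the cone to be pointed or Archimedean.
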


\begin{restatable}{thm}{mainfin}
  \label{mainfin}%
  A plausibility measure $\Pl$ on a test space $(X,\Sigma)$ with $X$
  finite agrees with some probability measure $\mu$ iff the image of
  $\Pl$ is totally ordered and $\Pl$ is Archimedean.
\end{restatable}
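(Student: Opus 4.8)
The plan is to prove both directions, with ``only if'' being routine (and essentially already sketched in the discussion preceding the definition of the Archimedean condition) and ``if'' resting on a separation argument in the finite-dimensional space $\R^X$.

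For necessity, suppose $\Pl$ agrees with a probability measure $\mu$. Since $\mu$ takes values in the totally ordered set $[0,1]$, for any two events $A,B$ either $\mu(A)\le\mu(B)$ or $\mu(B)\le\mu(A)$, hence $\Pl(A)\preceq\Pl(B)$ or $\Pl(B)\preceq\Pl(A)$ by \cref{agree}, so the image of $\Pl$ is a chain. For the Archimedean condition, if $(A_1,\dots,A_n)$ and $(B_1,\dots,B_n)$ are families in which every outcome of $X$ occurs equally often, then additivity of $\mu$ gives $\sum_i\mu(A_i)=\sum_x \#\{i:x\in A_i\}\,\mu(x)=\sum_x \#\{i:x\in B_i\}\,\mu(x)=\sum_i\mu(B_i)$; combined with $\mu(A_i)\le\mu(B_i)$ for $i<n$ (which follows from $\Pl(A_i)\preceq\Pl(B_i)$ via \cref{agree}) this yields $\mu(A_n)\ge\mu(B_n)$, hence $\Pl(A_n)\succeq\Pl(B_n)$.

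For sufficiency I would pass to linear algebra over $\R^X$ (finite-dimensional since $X$ is finite). Writing $\chi_A\in\R^X$ for the indicator of an event $A$, a probability measure is a vector $\mu\in\R^X$ with $\mu(x)\ge 0$ for all $x$ and $\langle\mu,\chi_T\rangle=1$ for every $T\in\Sigma$. I claim it suffices to find a $\mu\in\R^X$ satisfying the finitely many linear conditions $\langle\mu,\chi_A-\chi_B\rangle\le 0$ whenever $\Pl(A)\preceq\Pl(B)$ and $\langle\mu,\chi_A-\chi_B\rangle<0$ whenever $\Pl(A)\prec\Pl(B)$. Indeed, nonnegativity $\mu(x)\ge 0$ is the instance $A=\emptyset,\ B=\{x\}$ of the weak inequalities (using \ref{pl2}); the weak inequalities for pairs of tests (using \ref{pl1}) force $\langle\mu,\chi_T\rangle$ to be constant over $T\in\Sigma$, and the strict inequality for $\emptyset\subseteq T$ (using \ref{pl3}) forces this constant to be positive, so after dividing by $\langle\mu,\chi_{T_0}\rangle$ for a fixed $T_0\in\Sigma$ one obtains a genuine probability measure. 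Finally this $\mu$ agrees with $\Pl$: the forward implication of \cref{agree} is immediate from the weak inequalities, and for the converse, if $\mu(A)\le\mu(B)$ but $\Pl(A)\not\preceq\Pl(B)$ then total order of the image gives $\Pl(B)\prec\Pl(A)$, whence the strict inequality for $(B,A)$ gives $\mu(B)<\mu(A)$, a contradiction.

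It remains to produce such a $\mu$, which is a theorem of the alternative (an instance of the Hahn--Banach-type result of \S\ref{Hahn}; for finite $X$ it is elementary finite-dimensional convex separation): either the system is solvable, or $0$ is a nonnegative linear combination $\sum_k\lambda_k(\chi_{A_k}-\chi_{B_k})=0$ of the constraint vectors in which some vector arising from a strict pair $\Pl(A)\prec\Pl(B)$ carries a positive coefficient. In the second case, since all vectors $\chi_A-\chi_B$ have integer entries, the cone of such combinations is rational and we may take the $\lambda_k$ to be positive integers; listing each pair $(A_k,B_k)$ with multiplicity $\lambda_k$ yields families $(A_1,\dots,A_N)$, $(B_1,\dots,B_N)$ with $\sum_k\chi_{A_k}=\sum_k\chi_{B_k}$, i.e.\ every outcome occurs equally often in both, with $\Pl(A_k)\preceq\Pl(B_k)$ for all $k$ and $\Pl(A_k)\prec\Pl(B_k)$ for at least one $k$. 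Reindexing so that this strict pair is last, \cref{archim} forces $\Pl(A_N)\succeq\Pl(B_N)$, contradicting $\Pl(A_N)\prec\Pl(B_N)$. Hence the Archimedean hypothesis excludes the second alternative, the linear system is solvable, and the required probability measure exists.

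The main obstacle I anticipate is exactly this separation step: obtaining a clean theorem of the alternative for a \emph{mixed} system of weak and strict inequalities whose infeasibility certificate is an honest nonnegative-integer dependence among the vectors $\chi_A-\chi_B$. It is the integrality of that certificate, together with the rationality of the $\chi_A$, that lets one read off a genuine violation of \cref{archim}, and it is the point at which one must appeal carefully to the convexity input of \S\ref{Hahn} (and where, in the locally finite case, the finite-dimensional argument breaks down and only \emph{almost} agreement survives).
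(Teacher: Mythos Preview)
Your proof is correct and follows the same overall strategy as the paper---reduce to a Farkas/Hahn--Banach separation in the finite-dimensional space spanned by the outcomes, and use the Archimedean hypothesis to rule out the infeasibility certificate---but the execution differs in one structural respect. The paper first builds the cone $C$ generated by all $e_B-e_A$ with $\Pl(A)\preceq\Pl(B)$, shows (already in the proof of \cref{mainloc}) that $e_A-e_B\notin C$ whenever $\Pl(A)\prec\Pl(B)$, then for each such pair separates that single point from $C$ by a probability measure $\rho_{A,B}$ via \cref{HBseparate}, and finally \emph{averages} these finitely many functionals to obtain one that handles all strict pairs simultaneously. You instead invoke a Motzkin-type theorem of the alternative for the full mixed weak/strict system in one shot, so the averaging step is absorbed into the alternative theorem itself. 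Your route is a bit more compact and self-contained; the paper's route has the advantage of reusing verbatim the cone and the ``$e_A-e_B\notin C$'' argument already established for \cref{mainloc}, making the link to \S\ref{Hahn} explicit. Both arguments hinge on the same rationality observation (integer generators $\Rightarrow$ rational, hence integer, certificate) to convert the infeasibility witness into a concrete violation of \cref{archim}, and both finish the converse direction identically via the total-order hypothesis.
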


Unfortunately, we do not know whether the stronger result still holds
for general locally finite test spaces, but it seems likely that
additional topological assumptions may be required.

Our proof strategy is to translate the problem into the language of
order unit spaces over the rational field and make use of the
associated Hahn-Banach theorems.  Before proceeding, we therefore
review some of this theory.

\section{Hahn-Banach theorems for order unit spaces}

\label{Hahn}

Here we redevelop two standard Hahn-Banach theorems for order unit spaces (see e.g.~\cite{PT}) with minor
modifications. Let $V$ be a vector space over an ordered field $\F$. A
subset $C\subseteq V$ is a \emph{convex cone} if it is closed under
addition and positive scalar multiplication,
\begin{align*}
  a\in C,\: b\in C \:\Rightarrow\: a+b\in C && \lambda \in\Fplus,\:
  a\in C \:\Rightarrow\: \lambda a\in C.
\end{align*}
If the convex cone $C$ is clear from the context, then we also write
$a\geq b$ as shorthand for $a-b\in C$. It is straightforward to see
that `$\geq$' defines a partial ordering on $V$ with $a\geq 0$ if and
only if $a\in C$.

\begin{defn}
  An \emph{order unit space} is a triple $(V,C,u)$ such that
  $C\subseteq V$ is a convex cone and $u\geq 0$ is a distinguished
  element called the \emph{order unit} such that
  \begin{enumerate}
  \item $-u\not\geq 0$,
  \item\label{orderunit} For any $a\in V$, there is $\lambda\in\F$ such that
    $\lambda u + a \geq 0$.
  \end{enumerate}
\end{defn}

Axiom~\ref{orderunit} states exactly that every $a\in V$ can be lower bounded by a scalar multiple of $u$.

\begin{defn}
  A \emph{probability measure} $\rho$ on an order unit space $(V,C,u)$
  is a linear functional $\rho:V\to\R$ with $\rho(a)\geq 0$ for $a\geq
  0$ and $\rho(u)=1$.
\end{defn}

Note that the standard terminology for such a functional is ``state'', but we prefer ``probability measure'' for the sake of consistency with our test space terminology.

\begin{thm}[Hahn-Banach extension theorem]
  \label{HBextend}
  Let $(V,C,u)$ be an order unit space. If $U\subseteq V$ is a
  subspace with $u \in U$, then $(U,C\cap U,u)$ is again an order unit
  space. Any probability measure $\sigma$ on $U$ can be extended to a
  probability measure $\rho$ on $V$, i.e.~there is a probability
  measure $\rho:V\to\R$ such that $\rho|_U=\sigma$.
\end{thm}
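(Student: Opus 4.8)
The plan is to run the classical Hahn--Banach argument, adapted so that the domain is a vector space over the ordered field $\F$ while the functional takes values in $\R$. The assertion that $(U,C\cap U,u)$ is an order unit space is routine and I would dispatch it first: $C\cap U$ is a convex cone as the intersection of a convex cone with a subspace, the condition $-u\notin C$ already forbids $-u\in C\cap U$, and for $a\in U$ the second order unit axiom of $(V,C,u)$ supplies $\lambda\in\F$ with $\lambda u+a\in C$, which lies in $U$ since $u,a\in U$, hence in $C\cap U$. The real content is the extension of $\sigma$.

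Next I would introduce the gauge functional $p\colon V\to\R$,
\[
  p(v)\defin\inf\{\lambda\in\F : \lambda u - v\in C\},
\]
with the infimum taken in $\R$. The set on the right is nonempty by the second order unit axiom applied to $-v$, and it is upward closed because $u\geq 0$. The only place where the order unit axioms are used essentially is in showing it is bounded below: if it were not, then, being upward closed, it would be all of $\F$, so $w\defin -v$ would satisfy $w\geq tu$ for every $t\in\Fplus$; but the second order unit axiom also gives $\mu\in\F$ with $\mu u\geq w$, and choosing any $t\in\Fplus$ with $t>\mu$ yields $\mu u\geq tu$, i.e.\ $(\mu-t)u\in C$ with $\mu-t<0$, whence $-u\in C$, contradicting the first order unit axiom. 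So $p$ is well defined and real valued, and one checks directly that it is subadditive and positively homogeneous over $\Fplus$. Moreover $\sigma\leq p$ on $U$: if $v\in U$ and $\lambda u-v\in C$ with $\lambda\in\F$, then $\lambda u-v\in C\cap U$, so $0\leq\sigma(\lambda u-v)=\lambda-\sigma(v)$, and taking the infimum over such $\lambda$ gives $p(v)\geq\sigma(v)$.

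I would then invoke the Hahn--Banach extension theorem for $\F$-linear, $\R$-valued functionals dominated by $p$, proved exactly as in the textbook real case. To extend $\sigma$ across a single new direction $v_0\notin U$ one picks the value $c\defin\rho(v_0)\in\R$ subject to $\sup_{u_1\in U}\big(\sigma(u_1)-p(u_1-v_0)\big)\leq c\leq\inf_{u_2\in U}\big(p(u_2+v_0)-\sigma(u_2)\big)$, which is possible because $\sigma(u_1+u_2)\leq p(u_1+u_2)\leq p(u_1-v_0)+p(u_2+v_0)$ forces the left supremum below the right infimum (both finite), and $\R$ is densely ordered; a Zorn's lemma argument over $p$-dominated partial extensions then produces $\rho\colon V\to\R$ with $\rho|_U=\sigma$ and $\rho\leq p$. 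The only deviations from the textbook proof are cosmetic: the scalars $q$ occurring in $u+qv_0$ range over $\F$ rather than $\R$ (one reduces as usual to the cases $q>0$ and $q<0$ using that $p$ is $\Fplus$-homogeneous and $V$ is an $\F$-space), and the chosen value $c$ need not lie in $\F$.

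Finally I would check that $\rho$ is a probability measure. Since $u\in U$, $\rho(u)=\sigma(u)=1$. For positivity, if $a\in C$ then $0\cdot u-(-a)=a\in C$, so $p(-a)\leq 0$, hence $\rho(a)=-\rho(-a)\geq-p(-a)\geq 0$. Thus $\rho$ extends $\sigma$ to a probability measure on $V$, completing the proof. I expect the only step that is not purely mechanical to be the boundedness-below of the defining set of $p$; the construction of $p$, the domination $\sigma\leq p$, and the induction-plus-Zorn extension are routine verifications or line-by-line transcriptions of the classical argument with $\F$ as the scalar field and $\R$ as the codomain.
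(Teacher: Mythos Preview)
Your argument is correct. Both you and the paper run the Zorn-plus-one-step-extension scheme, but organized differently. The paper applies Zorn directly to the poset of partial probability measures $(W,\rho)$ lying above $(U,\sigma)$, and for the one-step extension across $a\notin W$ chooses $\rho'(a)$ in the interval
\[
\Big[\sup_{b\in W,\ b+a\geq 0}(-\rho(b)),\ \inf_{b\in W,\ b-a\geq 0}\rho(b)\Big],
\]
verifying nonemptiness from $b_1+b_2\geq 0\Rightarrow\rho(b_1)+\rho(b_2)\geq 0$ and positivity of the extension directly from the choice of endpoints. You instead first encode the cone in the gauge $p(v)=\inf\{\lambda\in\F:\lambda u-v\in C\}$, check $\sigma\leq p$ on $U$, and then invoke the domination form of Hahn--Banach, recovering positivity of the final $\rho$ afterward from $p(-a)\leq 0$ for $a\in C$. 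Your route buys a clean reduction to the textbook statement at the price of the extra work showing $p$ is finite-valued (your boundedness-below step, which is where $-u\not\geq 0$ enters); the paper's route skips $p$ altogether but must verify positivity of each partial extension by hand. Once unwound, the two one-step intervals coincide (substitute $b=\lambda u-u_2$), so the difference is packaging rather than substance.
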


\begin{proof}
  The claim that $(U,C\cap U,u)$ is again an order unit space is
  straightforward to check; the non-trivial part is the second
  statement. In the following, we present the standard argument
  adapted to our particular situation.

  Consider the collection of all pairs $(W,\rho)$ where $W\subseteq V$
  is a subspace with $u \in W$ and $\rho$ is a probability measure on
  $W$. This set is partially ordered by defining $(W,\rho)\leq
  (W',\rho')$ to mean that $W\subseteq W'$ and $\rho'|_W =
  \rho$. Since the hypothesis of Zorn's lemma trivially holds, it
  follows that this partially ordered set has a maximal
  element. Likewise, if we consider only all the elements ``above''
  the given $(U,\sigma)$, then Zorn's lemma still applies and we
  obtain that there exists a maximal element which is greater than or
  equal to $(U,\sigma)$. Let $(W,\rho)$ be such a maximal
  element. Then we have $\rho|_U=\sigma$ by construction.

  Our goal is to show that maximality implies $W=V$. For if $W=V$ did
  not hold, then we could find $a\in V$ with $a\not\in W$, consider
  $W'\defin W+\F a$, and extend $\rho$ to $\rho':W'\to\R$ by choosing
  \begin{equation}
    \label{HBinterval}
    \rho'(a) \in \left[ \sup_{b\in W \textrm{ s.t. } b + a \geq 0}
      (-\rho(b)), \inf_{b\in W \textrm{ s.t. } b - a \geq 0} \rho(b)
    \right]. 
  \end{equation}
  If the interval is non-empty, then we obtain $\rho':W'\to\R$ by
  linear extension. So it remains to show that the interval is
  non-empty, and that the resulting $\rho'$ is indeed a probability
  measure, since then we conclude $(W,\rho)<(W',\rho')$ in contradiction with the maximality assumption. We start with the first. The supremum is not $+\infty$
  since $u\in W$ and there is $\lambda\in\F$ with $\lambda u+a\geq
  0$; likewise, the infimum is not $-\infty$. It remains to be shown
  that its lower end is less than or equal to its upper end by showing
  that $-\rho(b_1)\leq \rho(b_2)$ for $b_1+a\geq 0$ and $b_2-a\geq
  0$. But the latter two conditions imply that $b_1+b_2\geq 0$, and
  hence $\rho(b_1) + \rho(b_2)\geq 0$ by the assumptions on $\rho$.

  Finally, we still need to show that $\rho'$ is indeed a probability
  measure, i.e.~that it is positive. But this is easy,
  since~\eqref{HBinterval} was engineered to guarantee precisely this:
  any element $c\in W'$ with $c\geq 0$ is a unique linear combination
  $c=\lambda a + b$ with $\lambda\in\F$ and $b\in W$. For $\lambda=0$,
  we have $\rho'(c)=\rho(c) \geq 0$, since $\rho$ is a probability
  measure. Otherwise, we can assume $\lambda=\pm 1$ after
  rescaling. If $\lambda=+1$, then we have $c = b + a \geq 0$, and the
  claim follows from $\rho'(a)\geq -\rho(b)$. If $\lambda=-1$, then we
  have $c = b - a$, and the claim follows from $\rho'(a)\leq \rho(b)$.
\end{proof}

Since there is a unique probability measure on the one-dimensional
subspace $U\defin \F u$, we obtain immediately:

\begin{cor}
  \label{HBsimple}
  There is a probability measure $\rho:V\to\R$.
\end{cor}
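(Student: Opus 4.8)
The plan is to read this off from the Hahn--Banach extension theorem (\cref{HBextend}) by starting from the smallest subspace that contains the order unit, namely the line $U \defin \F u$. Since $u = 1\cdot u \in U$, \cref{HBextend} already tells us that $(U, C\cap U, u)$ is an order unit space and that every probability measure on $U$ extends to one on all of $V$. So it is enough to exhibit a single probability measure on the one-dimensional space $U$.

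First I would check that $U$ really is one-dimensional, i.e.\ that $u \neq 0$: if $u$ were $0$ then $-u = u \geq 0$, contradicting the first axiom in the definition of an order unit space. Hence $\{u\}$ is a basis of $U$, and there is a well-defined linear functional $\sigma\colon U \to \R$ determined by $\sigma(\lambda u) \defin \lambda$ (here $\lambda$ ranges over $\F$, which we regard as a subfield of $\R$, as in the rational case $\F = \Q$ of interest). Plainly $\sigma(u) = 1$.

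The only point that needs an argument is that $\sigma$ is positive. Suppose $\lambda u \in C \cap U$. If we had $\lambda < 0$, then multiplying by the positive scalar $-\lambda^{-1} \in \Fplus$ and using that $C$ is a cone would yield $-u \in C$, i.e.\ $-u \geq 0$, contradicting the first axiom once more. Therefore $\lambda \geq 0$ and $\sigma(\lambda u) = \lambda \geq 0$, so $\sigma$ is a probability measure on $U$. Invoking \cref{HBextend} to extend $\sigma$ then produces the desired probability measure $\rho\colon V \to \R$.

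I do not expect any real obstacle here: the corollary is essentially \cref{HBextend} applied to the minimal admissible subspace. The only mildly delicate step is the positivity check for $\sigma$, and that is exactly the place where the order-unit axiom $-u \not\geq 0$ does the work.
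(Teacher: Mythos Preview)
Your argument is correct and is exactly the approach the paper takes: the paper simply remarks that there is a unique probability measure on the one-dimensional subspace $U=\F u$ and then appeals to \cref{HBextend}. You have merely spelled out the verification that $\sigma(\lambda u)=\lambda$ is positive (using $-u\not\geq 0$) and that $u\neq 0$, which the paper leaves implicit.
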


Our goal in the following is to derive a refined version of this
statement.

\begin{thm}[Hahn-Banach separation theorem]
  \label{HBseparate}
  Let $(V,C,u)$ be an order unit space and $a\in V$ an element for
  which there exists $n\in\N$ with $a+\tfrac{1}{n}u\not\geq 0$. Then
  there is a probability measure $\rho:V\to\R$ with $\rho(a) < 0$.
\end{thm}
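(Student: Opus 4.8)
The plan is to apply the extension theorem (\cref{HBextend}) to a suitable subspace on which we already have a probability measure witnessing $\rho(a) < 0$. The natural candidate is the two-dimensional subspace $U \defin \lin\{u, a\}$ (or $\F u$ itself if $a$ is a scalar multiple of $u$, a case handled separately). The key observation is that the hypothesis $a + \tfrac{1}{n}u \not\geq 0$ for some $n \in \N$ prevents $a$ from being ``too positive,'' so that we have room to define a functional on $U$ that is positive on $C \cap U$ but sends $a$ to a negative value.

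First I would dispose of the degenerate case: if $a \in \F u$, write $a = \lambda u$; then $a + \tfrac{1}{n}u = (\lambda + \tfrac{1}{n})u \not\geq 0$ forces $\lambda + \tfrac{1}{n} < 0$ (using axiom~(a) of the order unit space, $-u \not\geq 0$, together with positivity of $\tfrac 1n u$), hence $\lambda < 0$, and any probability measure $\rho$ from \cref{HBsimple} satisfies $\rho(a) = \lambda < 0$. Otherwise $\{u, a\}$ is linearly independent and $U$ is genuinely two-dimensional. On $U$ I would define $\sigma(u) = 1$ and $\sigma(a) = -\tfrac{1}{n}$, extended linearly: $\sigma(\alpha u + \beta a) = \alpha - \tfrac{\beta}{n}$. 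The content is to verify that $\sigma$ is positive on $C \cap U$, i.e.\ that $\alpha u + \beta a \geq 0$ implies $\alpha - \tfrac{\beta}{n} \geq 0$. For $\beta = 0$ this reduces to $\alpha \geq 0$, which follows from $-u \not\geq 0$. For $\beta \neq 0$ we may rescale to $\beta = \pm 1$. If $\beta = -1$, then $\alpha u - a \geq 0$; since also $\tfrac 1n u \geq 0$ and we may add, but more directly: $\alpha u \geq a$; combined with axiom~(b) this is consistent and one checks $\alpha \geq -\tfrac 1n$ would be the only worry, but actually the sign works out because adding $\tfrac 1n u$ keeps things $\geq 0$. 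The genuinely constraining case is $\beta = +1$: then $\alpha u + a \geq 0$, and we must show $\alpha \geq \tfrac 1n$. Suppose not, i.e.\ $\alpha < \tfrac 1n$. Then $\tfrac 1n u - \alpha u = (\tfrac 1n - \alpha) u \geq 0$, so adding this to $\alpha u + a \geq 0$ gives $a + \tfrac 1n u \geq 0$, contradicting the hypothesis. Hence $\sigma$ is a probability measure on $U$.

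Then I would invoke \cref{HBextend} to extend $\sigma$ to a probability measure $\rho$ on all of $V$ with $\rho|_U = \sigma$; in particular $\rho(a) = \sigma(a) = -\tfrac1n < 0$, which is what we want.

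The main obstacle, and the only place requiring genuine care, is the positivity check for $\sigma$ on $C \cap U$ in the $\beta = +1$ direction — this is exactly where the hypothesis $a + \tfrac 1n u \not\geq 0$ is used, and getting the inequality chain right (and handling the $\beta = -1$ direction, which should be automatic from $-u \not\geq 0$ but needs a sentence) is the crux. Everything else is a routine application of the extension theorem already proved.
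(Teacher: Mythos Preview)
Your overall strategy---restrict to $U = \F u + \F a$, build a probability measure $\sigma$ on $U$ with $\sigma(a) < 0$, then extend via \cref{HBextend}---is exactly the paper's, and your handling of the degenerate case and of the $\beta = +1$ direction is correct. The gap is the $\beta = -1$ direction, where your argument is not merely incomplete but wrong: the specific value $\sigma(a) = -\tfrac{1}{n}$ need not give a positive functional.

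You need $\alpha u - a \geq 0 \Rightarrow \alpha + \tfrac{1}{n} \geq 0$, and this fails in general. Take $V = \Q^2$, $C = \{(x,y) : x \geq |y|\}$, $u = (1,0)$, $a = (-100,1)$, $n = 1$. Then $a + u = (-99,1)\not\geq 0$, so the hypothesis holds; but $c \defin -50\,u - a = (50,-1)\in C$, while your $\sigma$ gives $\sigma(c) = -50 + 1 = -49 < 0$. Your sentence ``adding $\tfrac{1}{n}u$ keeps things $\geq 0$'' yields only the cone inequality $(\alpha + \tfrac{1}{n})u - a \geq 0$, not the scalar inequality $\alpha + \tfrac{1}{n} \geq 0$ that positivity of $\sigma$ requires. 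The axiom $-u\not\geq 0$ does not rescue this: nothing prevents $a$ from lying far below zero.

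The paper avoids this by not committing to $\sigma(a) = -\tfrac{1}{n}$. Instead it recalls from the proof of \cref{HBextend} that the feasible values for $\sigma(a)$ form the interval
\[
\left[\, \sup_{\lambda\,:\,\lambda u + a \geq 0}(-\lambda),\ \inf_{\lambda\,:\,\lambda u - a \geq 0} \lambda \,\right],
\]
and observes that the hypothesis is equivalent to the lower endpoint being strictly negative---which is precisely what your (correct) $\beta = +1$ argument establishes, since it shows every admissible $\lambda$ satisfies $\lambda \geq \tfrac{1}{n}$. One then chooses any negative value in this interval; positivity in both directions is then automatic by construction, and the extension step finishes the proof.
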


\begin{proof}
  If $a$ is a scalar multiple of $u$, then it must be a negative
  scalar multiple, and the claim follows from
  Corollary~\ref{HBsimple}. Otherwise, the subspace $U\defin \F u + \F
  a$ is two-dimensional, and this is the case that we consider from
  now on.

  By Theorem~\ref{HBextend}, it is sufficient to define $\rho$ on $U$
  only. Obtaining such a $\rho$ can be done as in the proof of
  Theorem~\ref{HBextend}: there is a unique state on the
  one-dimensional subspace $\F u$, and we extend this state to $U$ by
  finding a feasible value for $\rho(a)$ and extending linearly. The
  interval of feasible values~\eqref{HBinterval} now becomes
  \[
  \rho(a) \in \left[ \sup_{\lambda\in\F \textrm{ s.t. } \lambda u + a
      \geq 0} (-\lambda), \inf_{\lambda\in\F \textrm{ s.t. } \lambda u
      - a \geq 0} \lambda \right],
  \]
  and we have already shown above that this is a non-empty
  interval. However, we also would like to achieve $\rho(a) < 0$; in
  order for this to be possible, we need the lower end of the interval
  to be strictly negative, or equivalently
  \[
  \inf_{\lambda\in\F \textrm{ s.t. } \lambda u + a \geq 0} \lambda >
  0.
  \]
  But this is equivalent to the assumption of existence of an $n\in\N$
  such that $\tfrac{1}{n} u + a \not\geq 0$.
\end{proof}

\section{Proof of main results}

\label{Proofs}

\mainloc*

\begin{proof}
  Our strategy is to apply \cref{HBsimple} to a suitably constructed
  order unit space.

  Let $V$ be the rational vector space with basis $X$, i.e.~the
  elements of $V$ are finite $\Q$-linear combinations of the
  outcomes. We write $e_x$ for the basis vector associated to an
  outcome $x\in X$. If $A$ is an event, then we also write $e_A$ as
  shorthand for $\sum_{x\in A} e_x$. In particular, we have
  $e_\emptyset = 0$. We define a convex cone $C$ in this space as the
  set of all finite \emph{non-negative} linear combinations of vectors
  of the form
  \begin{equation}
    \label{conedefn}
    e_A - e_B
  \end{equation}
  for all pairs of events $A$ and $B$ with $\Pl(A)\succeq\Pl(B)$. The
  idea here is that if we can find a vector $u$ such that $(V,C,u)$ is
  an order unit space, then, by \cref{HBsimple}, there exists a
  probability measure $\rho:V\to\R$.  This induces a probability
  measure $\mu$ on the test space via $\mu(A) = \rho(e_A)$.

  In more detail, this works as follows: we fix a test $T\in\Sigma$
  and consider the distinguished element $u\defin e_T$, claiming that
  this $u$ is an order unit. In order to show that it indeed is, we
  need to prove that any other vector $\sum_x \alpha_x e_x$ can be
  lower bounded by a scalar multiple of $u$. If two vectors can be
  lower bounded by a multiple of $u$, then so can any positive linear
  combination of these vectors; therefore, it is sufficient to
  consider the case of a single basis vector $e_x$ or $-e_x$. The
  first is trivial: with $A=\{x\}$ and $B=\emptyset$, the inequality
  $e_x\geq 0$ is itself an instance of~\eqref{conedefn}. For the
  second, choose a test $S\in\Sigma$ with $x\in S$. Then $-e_x \geq
  -e_S$ is again an instance of~\eqref{conedefn}, since $\Pl(S)\succeq
  \Pl(\{x\})$. Moreover, since all tests have plausibility $1$, we
  also have $-e_S\geq -e_T=-u$ as an instance of~\eqref{conedefn}. In
  conclusion, we obtain $-e_x\geq -u$.

  In order to complete the verification of the order unit space
  axioms, we still need to make sure that $-u\not\geq 0$, i.e.~that
  $-e_T$ is not a positive linear combination of vectors of the
  form~\eqref{conedefn}. This is where the assumption that $\Pl$ is
  Archimedean comes in. In fact, we will prove something more general:
  if $A$ and $B$ are events with $\Pl(A) \prec \Pl(B)$, then $e_A -
  e_B$ is not in our convex cone; the statement $-u\not\geq 0$ is then
  the special case with $A=\emptyset$ and $B=T$. So assume that $e_A -
  e_B$ is in our convex cone. This means that there are events
  $(A_1,\ldots,A_n)$ and $(B_1,\ldots,B_n)$ such that
  \begin{equation}
    \label{decomp}
    e_A - e_B = \sum_i \lambda_i (e_{A_i} - e_{B_i}),
  \end{equation}
  where $\lambda_i \in \Q$ and $\Pl(A_i) \succeq \Pl(B_i)$.  Writing
  each $\lambda_i$ in lowest terms as $\lambda_i = \alpha_i/\beta_i$
  and multiplying \cref{decomp} by $N$, where $N$ the least common
  multiple of the $\beta_i's$, gives
  \begin{equation}
    N e_A - N e_B = \sum_i r_i (e_{A_i} - e_{B_i}),
  \end{equation}
  where each $r_i$ is a positive integer.  Defining the list
  $(A'_1,\ldots,A'_m)$ so that the first $r_1$ events are $A_1$, the
  next $r_2$ events are $A_2$, etc., and similarly for
  $(B'_1,\ldots,B'_m)$, we obtain the decomposition
  \begin{equation}
    N e_A - N e_B = \sum_i (e_{A'_i} - e_{B'_i}),
  \end{equation}
  or equivalently
  \begin{equation}
    \label{equiv}
    \sum_i e_{B'_i} + N e_A = \sum_i e_{A'_i} + N e_B.
  \end{equation}
  If we construct the lists $(A'_1,\ldots,A'_m,B, \ldots, B)$ and
  $(B'_1,\ldots,B'_m,A,\ldots,A)$ by appending $N$ copies of $B$ and
  $A$ to the end of the lists $(A'_1,\ldots,A'_m)$ and
  $(B'_1,\ldots,B'_m)$ respectively, then \cref{equiv} says exactly
  that each $x \in X$ occurs the same number of times in these two
  lists.  By construction we have $\Pl(A'_i)\succeq\Pl(B'_i)$ and
  $\Pl(B) \succ \Pl(A)$. But applying the Archimedean property of
  \cref{archim} then gives $\Pl(B)\preceq \Pl(A)$, which is a
  contradiction.
\end{proof}

\mainfin*

\begin{proof}
  The only if part follows from the properties of probability
  measures, so we focus on the if part.

  For finite
  $X$, the convex cone constructed in the proof of \cref{mainloc} is
  finite-dimensional and polyhedral. Hence by Theorem~\ref{HBseparate}
  (or simply Farkas' lemma), any point outside of the cone can be
  separated from the cone by a probability measure. In particular,
  this applies to the vector $e_A - e_B$ for any pair of events with
  $\Pl(A) \prec \Pl(B)$, which we previously proved to be outside of
  the cone. So let $\rho_{A,B}$ be such a probability measure with
  $\rho_{A,B}(e_A - e_B) < 0$. We then consider the new probability
  measure
  \[
  \rho' \defin \frac{1}{N} \sum_{A,B \textrm{ s.t. } \Pl(A) \prec
    \Pl(B)} \rho_{A,B},
  \]
  where $N$ is the appropriate normalization factor equal to the
  number of terms in the sum. We claim that $\rho'(e_A - e_B) < 0$ for
  any $A$ and $B$ with $\Pl(A) \prec \Pl(B)$. For any two pairs of
  events $A,B$ and $A',B'$ with $\Pl(A) \prec \Pl(B)$ and $\Pl(A')
  \prec \Pl(B')$, we have $\rho_{A',B'}(e_B - e_A) \geq 0$ since
  $e_B-e_A \geq 0$ and $\rho_{A',B'}$ is a positive functional. By
  linearity, we then have $\rho_{A',B'}(e_A - e_B) = -
  \rho_{A',B'}(e_B - e_A) \leq 0$.  But since $\rho_{A,B}(e_A - e_B) <
  0$ by definition, we obtain the claim. Setting $\mu(A) = \rho(e_A)$
  results in a probability measure on the test space which satisfies
  \begin{align*}
    \Pl(A) \preceq \Pl(B) \quad \Longrightarrow & \quad \mu(A) \leq \mu(B) , \\[5pt]
    \Pl(A) \prec \Pl(B) \quad \Longrightarrow & \quad \mu(A) < \mu(B) .
  \end{align*}
  To see that $\mu$ also satisfies the converse implication, we make use of the assumption of total
  ordering.  We know that, for every pair of events $A$ and $B$,
  either $\Pl(A) \preceq \Pl(B)$ or $\Pl(B) \preceq \Pl(B)$, and that
  this is reflected in the constructed probability measure $\mu$.
  Therefore, whenever $\mu(A) \leq \mu(B)$, this necessitates $\Pl(A) \preceq \Pl(B)$.
\end{proof}

\section{Related work}

\label{Related}

The use of ordering relations rather than precise numerical
probabilities has a long history in the foundations of probability
theory, particularly in the intuitive probability of Koopman
\cite{Koopman1940} and in subjective Bayesian probability
\cite{Finetti1937, Ramsey1931}.  The notions of agreement and almost
agreement used in this work are derived from the literature on
axiomatizing subjective probability in these terms (see
\cite{Fishburn1986} for a review).  Most of this literature assumes a
total ordering, but see \cite{Keynes1921} for early arguments that a
partial ordering should be used instead.

The notion of a plausibility measure is due to Friedman and Halpern
\cite{Friedman1995, Friedman2001, Halpern2001}.  Their work was aimed
at providing a unifying framework for various mathematical structures
that had been proposed for reasoning in the face of uncertainty in
artificial intelligence, such as probabilities and Dempster-Shafer
belief functions \cite{Demster1967, Shafer1976}.  In doing so, they
effectively reinvented earlier theories of comparative probability
theory \cite{Fine1973}, except with partial rather than total
ordering, and in a far more elegant formalism.  We have followed their
notations here.

On the quantum side, Foulis, Randall and Piron investigated the notion
of \emph{supports} on test spaces \cite{Foulis1985}, which in our
terminology are just possibility measures on test spaces.  A
plausibility measure on a test space is a generalization of this, and
can be viewed as a way of unifying it with the usual notion of a
probability measure, or state, on a test space.

More recently, there has been much interest in applying possibilistic
measures in the foundations of quantum theory.  One of the authors of
the present paper has investigated possibilistic hidden variable
theories \cite{Fritz2009}.  This was followed by work of Abramsky
\cite{Abramsky2013}, investigating the logical structure of
non-locality and contextuality using possibility measures.  As both
possibility and probability are special cases of plausibility,
plausibility measures have the potential to unify Abramsky's approach
with the conventional account of these phenomena in terms of
probabilities.

Finally, Schumacher and Westmoreland have developed a version of
quantum theory on vector spaces over finite fields, which they call
\emph{modal} quantum theory \cite{Schumacher2010, Schumacher2010a,
  Schumacher2012}.  They get around the problem of having no inner
product on such spaces by basing their theory on possibilistic
assignments rather than probabilities.  They have noted that not all
of the possibility measures in their theory can be represented by
probabilities \cite{Schumacher2010a, Schumacher2012} (i.e.\ that they
don't agree with any probability measure in our terminology) and
raised the question of how such cases could be identified.  The
Archimedean condition can certainly be used for this as, in
particular, \cref{agreefail} occurs in their theory for a modal
quantum bit.  However, this does not yield an efficient algorithm for
checking agreement, as one potentially has to consider all possible
families of events.  Employing the order unit space construction
directly may be preferable.

\section{Conclusion}

\label{Conc}

In this paper, we have taken the first steps in developing the theory
of plausibility measures for general operational theories, which
include quantum theory as a special case.  We have shown that the
Archimedean property, which successfully identifies when classical
plausibility measures agree with classical probability measures, is
still useful in general theories.  In particular, for theories based
on strictly finite test spaces, such as Schumacher-Westmoreland modal
quantum theory, it provides a necessary and sufficient condition for
agreement.  For locally finite test spaces, we have shown that it is
sufficient for almost agreement, but it is possible that this might be
improved.

There are many potential applications of plausibility measures, some
of which will be developed in future work.  In particular, they can be
used to unify possibilistic and probabilistic approaches to phenomena
in quantum foundations, such as nonlocality and contextuality.
Additionally, in quantum information science, there are many
applications in which reasoning with qualitative comparisons rather
than precise numerical probabilities may be beneficial.  For example,
perhaps quantum cryptography protocols can be proved secure using only
a few comparative assessments, or perhaps algorithms for numerically
simulating quantum systems can be rendered more efficient by only
tracking qualitative information, such as whether or not the system is
close to being in an eigenstate of some set of observables.  The
latter would be analogous to the classical artificial intelligence
applications for which Friedman and Halpern originally invented
plausibility measures \cite{Halpern2001a}.

More speculatively, plausibility measures offer the possibility of
developing future theories of physics using a weaker predictive
structure than probability, whilst still allowing for precise
probabilities in an appropriate limit.  This might be necessary in
future theories of quantum gravity.

\bibliographystyle{eptcs}
\bibliography{plausibility}

\begin{thebibliography}{10}
\providecommand{\bibitemdeclare}[2]{}
\providecommand{\surnamestart}{}
\providecommand{\surnameend}{}
\providecommand{\urlprefix}{Available at }
\providecommand{\url}[1]{\texttt{#1}}
\providecommand{\href}[2]{\texttt{#2}}
\providecommand{\urlalt}[2]{\href{#1}{#2}}
\providecommand{\doi}[1]{doi:\urlalt{http://dx.doi.org/#1}{#1}}
\providecommand{\bibinfo}[2]{#2}

\bibitemdeclare{article}{Abramsky2013}
\bibitem{Abramsky2013}
\bibinfo{author}{Samson \surnamestart Abramsky\surnameend}
  (\bibinfo{year}{2013}): \emph{\bibinfo{title}{Relational Hidden Variables and
  Non-Locality}}.
\newblock {\sl \bibinfo{journal}{Studia Logica}}
  \bibinfo{volume}{101}(\bibinfo{number}{2}), pp. \bibinfo{pages}{411--452},
  \doi{10.1007/s11225-013-9477-4}.
\newblock \urlprefix\url{http://arxiv.org/abs/1007.2754}.

\bibitemdeclare{article}{Buniy2005}
\bibitem{Buniy2005}
\bibinfo{author}{Roman~V. \surnamestart Buniy\surnameend},
  \bibinfo{author}{Stephen D.~H. \surnamestart Hsu\surnameend} \&
  \bibinfo{author}{A.~\surnamestart Zee\surnameend} (\bibinfo{year}{2005}):
  \emph{\bibinfo{title}{Is {H}ilbert space discrete?}}
\newblock {\sl \bibinfo{journal}{Physics Letters B}}
  \bibinfo{volume}{630}(\bibinfo{number}{1--2}), pp. \bibinfo{pages}{68--72},
  \doi{10.1016/j.physletb.2005.09.084}.
\newblock \urlprefix\url{http://arxiv.org/abs/hep-th/0508039}.

\bibitemdeclare{article}{Buniy2006}
\bibitem{Buniy2006}
\bibinfo{author}{Roman~V. \surnamestart Buniy\surnameend},
  \bibinfo{author}{Stephen D.~H. \surnamestart Hsu\surnameend} \&
  \bibinfo{author}{A.~\surnamestart Zee\surnameend} (\bibinfo{year}{2006}):
  \emph{\bibinfo{title}{Discreteness and the origin of probability in quantum
  mechanics}}.
\newblock {\sl \bibinfo{journal}{Physics Letters B}}
  \bibinfo{volume}{640}(\bibinfo{number}{4}), pp. \bibinfo{pages}{219--223},
  \doi{10.1016/j.physletb.2006.07.050}.
\newblock \urlprefix\url{http://arxiv.org/abs/hep-th/0606062}.

\bibitemdeclare{article}{Demster1967}
\bibitem{Demster1967}
\bibinfo{author}{A.~P. \surnamestart Demster\surnameend}
  (\bibinfo{year}{1967}): \emph{\bibinfo{title}{Uppsubjective probabilities
  induced by a multivalued mapping}}.
\newblock {\sl \bibinfo{journal}{Annals of Mathematical Statistics}}
  \bibinfo{volume}{38}(\bibinfo{number}{2}), pp. \bibinfo{pages}{325--339}.
\newblock \urlprefix\url{10.1214/aoms/1177698950}.

\bibitemdeclare{book}{Fine1973}
\bibitem{Fine1973}
\bibinfo{author}{Terrence~L. \surnamestart Fine\surnameend}
  (\bibinfo{year}{1973}): \emph{\bibinfo{title}{Theories of Probability: An
  Examination of Foundations}}.
\newblock \bibinfo{publisher}{Academic Press}.

\bibitemdeclare{article}{Finetti1937}
\bibitem{Finetti1937}
\bibinfo{author}{Bruno \surnamestart de~Finetti\surnameend}
  (\bibinfo{year}{1937}): \emph{\bibinfo{title}{La pr{\'e}vision: ses lois
  logiques, ses sources subjectives}}.
\newblock {\sl \bibinfo{journal}{Ann. Inst. H. Poincar{\'e}}}
  \bibinfo{volume}{7}, pp. \bibinfo{pages}{1--68}.
\newblock \bibinfo{note}{English translation in H.~E.~Kyburg Jr.~\&
  H.~E.~Smokler, editors: \emph{Studies in Subjective Probability}, pp.~93--158
  (1964).}

\bibitemdeclare{article}{Fishburn1986}
\bibitem{Fishburn1986}
\bibinfo{author}{Peter~C. \surnamestart Fishburn\surnameend}
  (\bibinfo{year}{1986}): \emph{\bibinfo{title}{The Axioms of Subjective
  Probability}}.
\newblock {\sl \bibinfo{journal}{Statistical Science}}
  \bibinfo{volume}{1}(\bibinfo{number}{3}), pp. \bibinfo{pages}{335--358}.
\newblock \urlprefix\url{http://projecteuclid.org/euclid.ss/1177013611}.

\bibitemdeclare{article}{Foulis1985}
\bibitem{Foulis1985}
\bibinfo{author}{D.~J. \surnamestart Foulis\surnameend},
  \bibinfo{author}{C.~\surnamestart Piron\surnameend} \& \bibinfo{author}{C.~H.
  \surnamestart Randall\surnameend} (\bibinfo{year}{1985}):
  \emph{\bibinfo{title}{Realism, opeoperational and quantum mechanics}}.
\newblock {\sl \bibinfo{journal}{Foundations of Physics}}
  \bibinfo{volume}{13}(\bibinfo{number}{8}), pp. \bibinfo{pages}{813--842},
  \doi{10.1007/BF01906271}.

\bibitemdeclare{article}{Foulis1972}
\bibitem{Foulis1972}
\bibinfo{author}{D.~J. \surnamestart Foulis\surnameend} \&
  \bibinfo{author}{C.~H. \surnamestart Randall\surnameend}
  (\bibinfo{year}{1972}): \emph{\bibinfo{title}{Operational Statistics. {I}.
  {B}asic Concepts}}.
\newblock {\sl \bibinfo{journal}{Journal of Mathematical Physics}}
  \bibinfo{volume}{13}(\bibinfo{number}{11}), pp. \bibinfo{pages}{1667--1675},
  \doi{10.1063/1.1665890}.

\bibitemdeclare{inproceedings}{Friedman1995}
\bibitem{Friedman1995}
\bibinfo{author}{Nir \surnamestart Friedman\surnameend} \&
  \bibinfo{author}{Joseph~Y. \surnamestart Halpern\surnameend}
  (\bibinfo{year}{1995}): \emph{\bibinfo{title}{Plausibility Measures: A User's
  Guide}}.
\newblock In \bibinfo{editor}{Philippe \surnamestart Besnard\surnameend} \&
  \bibinfo{editor}{Steve \surnamestart Hanks\surnameend}, editors: {\sl
  \bibinfo{booktitle}{Proceedings of the Eleventh Conference on Uncertainty in
  Artificial Intelligence (UAI1995)}}, \bibinfo{publisher}{Morgan Kaufmann},
  pp. \bibinfo{pages}{175--184}.
\newblock \urlprefix\url{http://arxiv.org/abs/1302.4947}.

\bibitemdeclare{article}{Friedman2001}
\bibitem{Friedman2001}
\bibinfo{author}{Nir \surnamestart Friedman\surnameend} \&
  \bibinfo{author}{Joseph~Y. \surnamestart Halpern\surnameend}
  (\bibinfo{year}{2001}): \emph{\bibinfo{title}{Plausibility measures and
  default reasoning}}.
\newblock {\sl \bibinfo{journal}{Journal of the ACM}}
  \bibinfo{volume}{48}(\bibinfo{number}{4}), pp. \bibinfo{pages}{648--685},
  \doi{10.1145/502090.502092}.
\newblock \urlprefix\url{http://arxiv.org/abs/cs/9808007}.

\bibitemdeclare{unpublished}{Fritz2009}
\bibitem{Fritz2009}
\bibinfo{author}{Tobias \surnamestart Fritz\surnameend} (\bibinfo{year}{2009}):
  \emph{\bibinfo{title}{Possibilistic Physics}}.
\newblock \urlprefix\url{http://fqxi.org/community/forum/topic/569}.
\newblock \bibinfo{note}{2009 FQXi Essay Contest entry}.

\bibitemdeclare{article}{Gleason1957}
\bibitem{Gleason1957}
\bibinfo{author}{Andrew~M. \surnamestart Gleason\surnameend}
  (\bibinfo{year}{1957}): \emph{\bibinfo{title}{Measures on the closed
  subspaces of a {H}ilbert space}}.
\newblock {\sl \bibinfo{journal}{Indiana Univ. Math. J.}}
  \bibinfo{volume}{6}(\bibinfo{number}{4}), pp. \bibinfo{pages}{885--893},
  \doi{10.1512/iumj.1957.6.56050}.

\bibitemdeclare{article}{Halpern2001a}
\bibitem{Halpern2001a}
\bibinfo{author}{Joseph~Y. \surnamestart Halpern\surnameend}
  (\bibinfo{year}{2001}): \emph{\bibinfo{title}{Conditional plausibility
  measures and {B}ayesian networks}}.
\newblock {\sl \bibinfo{journal}{Journal of AI Research}} \bibinfo{volume}{14},
  pp. \bibinfo{pages}{359--389}, \doi{10.1613/jair.817}.
\newblock \urlprefix\url{http://arxiv.org/abs/cs/0005031}.

\bibitemdeclare{inproceedings}{Halpern2001}
\bibitem{Halpern2001}
\bibinfo{author}{Joseph~Y. \surnamestart Halpern\surnameend}
  (\bibinfo{year}{2001}): \emph{\bibinfo{title}{Plausibility Measures: A
  General Approach for Representing Uncertainty}}.
\newblock In: {\sl \bibinfo{booktitle}{Proceedings of the 17th International
  Joint Conference on AI (IJCAI 2001)}}, \bibinfo{volume}{2},
  \bibinfo{publisher}{Morgan Kaufmann}, pp. \bibinfo{pages}{1474--1483}.
\newblock
  \urlprefix\url{http://www.cs.cornell.edu/home/halpern/abstract.html#plaus}.

\bibitemdeclare{book}{Keynes1921}
\bibitem{Keynes1921}
\bibinfo{author}{John~Maynard \surnamestart Keynes\surnameend}
  (\bibinfo{year}{1921}): \emph{\bibinfo{title}{A Treatise on Probability}}.
\newblock \bibinfo{publisher}{Macmillan}.
\newblock
  \urlprefix\url{https://archive.org/details/treatiseonprobab007528mbp}.

\bibitemdeclare{article}{Koopman1940}
\bibitem{Koopman1940}
\bibinfo{author}{B.~O. \surnamestart Koopman\surnameend}
  (\bibinfo{year}{1940}): \emph{\bibinfo{title}{The Axioms and Algebra of
  Intuitive Probability}}.
\newblock {\sl \bibinfo{journal}{The Annals of Mathematics}}
  \bibinfo{volume}{41}(\bibinfo{number}{2}), pp. \bibinfo{pages}{269--292},
  \doi{10.2307/1969003}.

\bibitemdeclare{article}{Kraft1959}
\bibitem{Kraft1959}
\bibinfo{author}{Charles~H. \surnamestart Kraft\surnameend},
  \bibinfo{author}{John~W. \surnamestart Pratt\surnameend} \&
  \bibinfo{author}{A.~\surnamestart Seidenberg\surnameend}
  (\bibinfo{year}{1959}): \emph{\bibinfo{title}{Intuitive probability on finite
  sets}}.
\newblock {\sl \bibinfo{journal}{Annals of Mathematical Statistics}}
  \bibinfo{volume}{30}(\bibinfo{number}{2}), pp. \bibinfo{pages}{408--419},
  \doi{10.1214/aoms/1177706260}.
\newblock \urlprefix\url{http://projecteuclid.org/euclid.aoms/1177706260}.

\bibitemdeclare{article}{Mansfield2012}
\bibitem{Mansfield2012}
\bibinfo{author}{Shane \surnamestart Mansfield\surnameend} \&
  \bibinfo{author}{Tobias \surnamestart Fritz\surnameend}
  (\bibinfo{year}{2012}): \emph{\bibinfo{title}{Hardy's Non-locality Paradox
  and Possibilistic Conditions for Non-locality}}.
\newblock {\sl \bibinfo{journal}{Foundations of Physics}}
  \bibinfo{volume}{42}(\bibinfo{number}{5}), pp. \bibinfo{pages}{709--719},
  \doi{10.1007/s10701-012-9640-1}.
\newblock \urlprefix\url{http://arxiv.org/abs/1105.1819}.

\bibitemdeclare{article}{Mueller2009}
\bibitem{Mueller2009}
\bibinfo{author}{Markus \surnamestart M{\"u}ller\surnameend}
  (\bibinfo{year}{2009}): \emph{\bibinfo{title}{Does probability become fuzzy
  in small regions of spacspace?}}
\newblock {\sl \bibinfo{journal}{Physics Letters B}}
  \bibinfo{volume}{673}(\bibinfo{number}{2}), pp. \bibinfo{pages}{166--167},
  \doi{10.1016/j.physletb.2009.02.017}.
\newblock \urlprefix\url{http://arxiv.org/abs/0712.4090}.

\bibitemdeclare{article}{PT}
\bibitem{PT}
\bibinfo{author}{Vern \surnamestart Paulsen\surnameend} \&
  \bibinfo{author}{Mark \surnamestart Tomforde\surnameend}
  (\bibinfo{year}{2009}): \emph{\bibinfo{title}{Vector spaces with an order
  unit}}.
\newblock {\sl \bibinfo{journal}{Indiana Univ. Math. J.}} \bibinfo{volume}{3},
  pp. \bibinfo{pages}{1319--1359}.
\newblock \urlprefix\url{http://arxiv.org/abs/0712.2613}.

\bibitemdeclare{incollection}{Ramsey1931}
\bibitem{Ramsey1931}
\bibinfo{author}{F.~P. \surnamestart Ramsey\surnameend} (\bibinfo{year}{1931}):
  \emph{\bibinfo{title}{Truth and Probability}}.
\newblock In \bibinfo{editor}{R.~B. \surnamestart Braithwaite\surnameend},
  editor: {\sl \bibinfo{booktitle}{The Foundations of Mathematics and other
  Logical Essays}}, chapter~\bibinfo{chapter}{7}, \bibinfo{publisher}{Kegan,
  Paul, Trench, Trubner \& Co.}, pp. \bibinfo{pages}{156--198}.
\newblock \urlprefix\url{http://core.ac.uk/download/pdf/7048428.pdf}.
\newblock \bibinfo{note}{1999 electronic edition}.

\bibitemdeclare{article}{Randall1973}
\bibitem{Randall1973}
\bibinfo{author}{C.~H. \surnamestart Randall\surnameend} \&
  \bibinfo{author}{D.~J. \surnamestart Foulis\surnameend}
  (\bibinfo{year}{1973}): \emph{\bibinfo{title}{Operational Statistics. {II}.
  {M}anuals of operations and their logics}}.
\newblock {\sl \bibinfo{journal}{Journal of Mathematical Physics}}
  \bibinfo{volume}{14}(\bibinfo{number}{10}), pp. \bibinfo{pages}{1472--1480},
  \doi{10.1063/1.1666208}.

\bibitemdeclare{inproceedings}{Schumacher2010}
\bibitem{Schumacher2010}
\bibinfo{author}{Benjamin \surnamestart Schumacher\surnameend} \&
  \bibinfo{author}{Michael~D. \surnamestart Westmoreland\surnameend}
  (\bibinfo{year}{2010}): \emph{\bibinfo{title}{Modal quantum theory}}.
\newblock In \bibinfo{editor}{Bob \surnamestart Coecke\surnameend},
  \bibinfo{editor}{Prakash \surnamestart Panangaden\surnameend} \&
  \bibinfo{editor}{Peter \surnamestart Selinger\surnameend}, editors: {\sl
  \bibinfo{booktitle}{Proceedings of the 7th International QPL Workshop:
  Quantum Physics and Logic}}, pp. \bibinfo{pages}{145--149}.
\newblock
  \urlprefix\url{http://www.cs.ox.ac.uk/people/bob.coecke/PDFS/19-Schumacher-Westmoreland-Modal.pdf}.
\newblock \bibinfo{note}{Available at \url{http://arxiv.org/abs/1010.2929}}.

\bibitemdeclare{unpublished}{Schumacher2010a}
\bibitem{Schumacher2010a}
\bibinfo{author}{Benjamin \surnamestart Schumacher\surnameend} \&
  \bibinfo{author}{Michael~D. \surnamestart Westmoreland\surnameend}
  (\bibinfo{year}{2010}): \emph{\bibinfo{title}{Non-contextuality and free will
  in modal quantum theory}}.
\newblock \urlprefix\url{http://arxiv.org/abs/1010.5452}.

\bibitemdeclare{unpublished}{Schumacher2012}
\bibitem{Schumacher2012}
\bibinfo{author}{Benjamin \surnamestart Schumacher\surnameend} \&
  \bibinfo{author}{Michael~D. \surnamestart Westmoreland\surnameend}
  (\bibinfo{year}{2012}): \emph{\bibinfo{title}{theory quantum theory}}.
\newblock \urlprefix\url{http://arxiv.org/abs/1204.0701}.
\newblock \bibinfo{note}{To appear in Giulio Chiribella \& Robert W. Spekkens,
  editors: \emph{Quantum Theory: Informational Foundations and Foils}
  (Springer)}.

\bibitemdeclare{article}{Scott1964}
\bibitem{Scott1964}
\bibinfo{author}{Dana \surnamestart Scott\surnameend} (\bibinfo{year}{1964}):
  \emph{\bibinfo{title}{Measurement structures and linear inequalities}}.
\newblock {\sl \bibinfo{journal}{Journal of Mathematical Psychology}}
  \bibinfo{volume}{1}(\bibinfo{number}{2}), pp. \bibinfo{pages}{233--247},
  \doi{10.1016/0022-2496(64)90002-1}.

\bibitemdeclare{book}{Shafer1976}
\bibitem{Shafer1976}
\bibinfo{author}{Glenn \surnamestart Shafer\surnameend} (\bibinfo{year}{1976}):
  \emph{\bibinfo{title}{A Mathematical Theory of Evidence}}.
\newblock \bibinfo{publisher}{Princeton University Press}.

\bibitemdeclare{incollection}{Wilce2009}
\bibitem{Wilce2009}
\bibinfo{author}{Alexander \surnamestart Wilce\surnameend}
  (\bibinfo{year}{2009}): \emph{\bibinfo{title}{Test spaces}}.
\newblock In \bibinfo{editor}{K.~\surnamestart Engesser\surnameend},
  \bibinfo{editor}{D.~M. \surnamestart Gabbay\surnameend} \&
  \bibinfo{editor}{D.~\surnamestart Lehmann\surnameend}, editors: {\sl
  \bibinfo{booktitle}{Handbook of quantum logic and quantum structures: quantum
  logic}}, chapter~\bibinfo{chapter}{8}, \bibinfo{publisher}{Elsevier}, pp.
  \bibinfo{pages}{443--549}.

\end{thebibliography}

\end{document}